\newtheorem{theorem}{Theorem}[section]
\theoremstyle{definition}
\theoremstyle{remark}
\newtheorem{corollary}[theorem]{Corollary}
\numberwithin{equation}{section}
\begin{document}

 \title[Asymptotic expansions for moments of number of comparisons]{Asymptotic expansions for moments of number of comparisons \\  used by the randomized quick sort algorithm}


\author{Sumit Kumar Jha}
\address{Center for Security, Theory, and Algorithmic Research\\ 
International Institute of Information Technology, 
Hyderabad, India}
\curraddr{}
\email{kumarjha.sumit@research.iiit.ac.in}
\thanks{}


\date{}

\begin{abstract}
We calculate asymptotic expansions for the moments of number of comparisons used by the randomized quick sort algorithm using the singularity analysis of certain generating functions.
\end{abstract}

\maketitle

\section{Introduction}
\nocite{prodinger52} 
The basic quick sort algorithm recursively sorts numbers in an array by partitioning the input array into two independent and smaller subarrays, and thereafter sorting these subarrays. The partitioning procedure chooses the first or last element of the array as \emph{pivot} and puts it in its right place so that numbers to the left of it are smaller than it, and those appearing to the right of it are larger than it.\par 
For purposes of this analysis we assume that the input array to the quick sort
algorithm contains \emph{distinct numbers}. Since the sorting algorithm depends only on
the relative order of the numbers in the input array and not their individual values, therefore we can associate to each input array having $n$ elements a permutation of $\{1, 2, \cdots , n\}$.\par 
Let $S_{n}$ denote the set of all permutations of $\{1,\cdots,n\}$. We assume that input arrays of size $n$ are permutations in the set $S_{n}$ with uniform probability distribution (each permutation occurring with probability $1/n!$\, .)\par 
Let $a_{n,k}$ be the number of permutations in $S_{n}$ requiring a total of $k$ comparisons to sort by the quick sort algorithm. Define $s$th factorial moment of the number of comparisons by
\begin{equation}
\label{neq1}
\beta_{s}(n)=\sum_{k\geq 0}(k)_{s}\frac{a_{n,k}}{n!},
\end{equation}
where $(k)_{s}=k(k-1)(k-2)\cdots (k-s+1).$\par 
We wish to calculate the following important result using the singularity analysis of certain generating functions appearing in \cite{flajolet}
\begin{theorem}[\cite{bworld}]
\label{mainthm}
For integers $s\geq 1$ we have
\begin{equation}
\label{neq2}
\beta_{s}(n)=2^{s}n^{s}\log^{s}{n}+2^{s}s(\gamma-2)n^{s}\log^{s-1}{n}+O(n^{s}\cdot \log^{s-2}{n}),
\end{equation}
where $\gamma=0.57721\cdots$ is the Euler's constant.\par 
\end{theorem}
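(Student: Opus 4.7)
My plan is to follow the singularity-analysis framework of \cite{flajolet}. Introduce the probability generating function $P_n(u)=\frac{1}{n!}\sum_k a_{n,k}u^k$, so that $\beta_s(n)=P_n^{(s)}(1)$, and the bivariate ordinary generating function $\phi(z,u)=\sum_{n\ge 0}P_n(u)z^n$. Conditioning on the rank of the pivot yields $nP_n(u)=u^{n-1}\sum_{k=0}^{n-1}P_k(u)P_{n-1-k}(u)$, which translates into
\begin{equation}\label{eq:pde}
\partial_z\phi(z,u)=\phi(uz,u)^2,\qquad \phi(0,u)=1.
\end{equation}
Setting $B_s(z):=\partial_u^s\phi(z,u)\big|_{u=1}=\sum_n\beta_s(n)z^n$, we have $B_0(z)=1/(1-z)$, and \eqref{neq2} becomes an asymptotic statement on $[z^n]B_s(z)$.

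To obtain an ODE for $B_s$, set $u=1+v$ in \eqref{eq:pde} and match coefficients of $v^s$. A bivariate Taylor expansion gives $\phi((1+v)z,1+v)=\sum_{r\ge 0}\frac{v^r}{r!}A_r(z)$ with $A_r(z)=\sum_{j=0}^r\binom{r}{j}z^j B_{r-j}^{(j)}(z)$, and squaring produces the first-order linear ODE
\[
B_s'(z)-\frac{2}{1-z}B_s(z)=\mathcal{R}_s(z),\qquad B_s(0)=0,
\]
where $\mathcal{R}_s(z)=\frac{2}{1-z}\sum_{j=1}^s\binom{s}{j}z^j B_{s-j}^{(j)}(z)+\sum_{r=1}^{s-1}\binom{s}{r}A_r(z)A_{s-r}(z)$ involves only $B_j$ for $j<s$. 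The integrating factor $(1-z)^2$ yields the closed form $B_s(z)=(1-z)^{-2}\int_0^z(1-t)^2\mathcal{R}_s(t)\,dt$.

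With $L:=\log\frac{1}{1-z}$, I would prove by induction on $s\ge 1$ that, near $z=1$,
\[
B_s(z)=\frac{2^s s!\,L^s+2^s\,s\,s!\,(H_s-2)\,L^{s-1}}{(1-z)^{s+1}}+O\!\left(\frac{L^{s-2}}{(1-z)^{s+1}}\right)+O\!\left(\frac{L^s}{(1-z)^s}\right).
\]
The base case $s=1$ gives $B_1(z)=(2L-2z)/(1-z)^2$ by direct solution. For $s\ge 2$, substituting the hypothesis into $\mathcal{R}_s$ shows that the quadratic tail supplies the leading singularity $(s-1)\,2^s s!\,L^s/(1-z)^{s+2}$, while the linear $j=1$ piece $(2sz/(1-z))B_{s-1}'(z)$ supplies the leading $L^{s-1}$ behaviour; repeated integration by parts on $\int_0^z L^s/(1-t)^s\,dt$ and $\int_0^z L^{s-1}/(1-t)^s\,dt$ then reconstructs both the $L^s$ and $L^{s-1}$ coefficients of $B_s$ with the asserted values. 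Applying the Flajolet--Odlyzko transfer theorem together with the standard formula
\[
[z^n]\frac{L^j}{(1-z)^{s+1}}=\frac{n^s}{s!}\!\left[(\log n)^j+j(\gamma-H_s)(\log n)^{j-1}+O((\log n)^{j-2})\right],
\]
obtained by differentiating $[z^n](1-z)^{-\alpha}=\Gamma(n+\alpha)/(\Gamma(\alpha)\Gamma(n+1))$ in $\alpha$, gives $\beta_s(n)=2^s n^s\log^s n+2^s s[(\gamma-H_s)+(H_s-2)]n^s\log^{s-1}n+O(n^s\log^{s-2}n)$, and the $H_s$'s cancel to yield \eqref{neq2}.

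The main obstacle is the bookkeeping in the inductive step: to isolate the $L^{s-1}$ coefficient of $B_s(z)$ one must simultaneously propagate the leading and first-subleading terms of every $B_j$ ($j<s$) through both the quadratic and linear parts of $\mathcal{R}_s$, and these mix with the subleading corrections produced by the integration-by-parts cascade. The simplifying observation is that only the $j\in\{0,1\}$ pieces of the linear forcing and the $L^s$ and $L^{s-1}$ cross-terms of the quadratic tail influence the two leading orders of $B_s$, so the computation reduces to the elementary identity $\sum_{m=1}^{s-1}m(2H_m-3)=s(s-1)(H_s-2)$, readily verified by induction on $s$.
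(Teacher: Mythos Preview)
Your proposal is correct and follows essentially the same route as the paper: the same PDE for the bivariate generating function, the same first-order linear ODE $B_s'-\tfrac{2}{1-z}B_s=\text{(lower-order forcing)}$ solved via the integrating factor $(1-z)^2$, the same inductive singular expansion $B_s(z)=2^s s!\,L^s/(1-z)^{s+1}+2^s s!\,s(H_s-2)L^{s-1}/(1-z)^{s+1}+\cdots$, and the same application of the Flajolet--Odlyzko transfer formula with $C_1=\gamma-H_s$ so that the $H_s$'s cancel. The only differences are cosmetic: you swap the roles of $z$ and $u$, and you split the forcing term into a ``linear'' piece $\tfrac{2}{1-z}\sum_{j\ge1}\binom{s}{j}z^jB_{s-j}^{(j)}$ plus a ``quadratic'' piece $\sum_{r=1}^{s-1}\binom{s}{r}A_rA_{s-r}$, which is exactly the paper's four-index sum $p_s$ regrouped via $A_r=\sum_k\binom{r}{k}z^kB_{r-k}^{(k)}$; your harmonic-number identity $\sum_{m=1}^{s-1}m(2H_m-3)=s(s-1)(H_s-2)$ is equivalent to the paper's $\sum_{k=1}^{s-1}kH_k=\tfrac{s(s-1)}{2}(H_s-\tfrac12)$.
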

Note that the $s$th moment of number of comparisons admits the same asymptotic expansion as $\beta_{s}(n)$ since it is just a linear combination of $j$th factorial moments for $j\leq s$.
\section{The Calculation}
We start be defining the \emph{probability generating function}:
$$G_{n}(z)=\sum_{k\geq 0}\frac{a_{n,k}\, z^{k}}{n!}.$$
\begin{theorem}
For $n\geq 1$
\begin{equation}
\label{eq1}
G_{n}(z)=\frac{z^{n-1}}{n}\sum_{1\leq j \leq n}G_{n-j}(z)G_{j-1}(z),
\end{equation}
and 
\begin{equation}
\label{eq2}
G_{0}(z)=1.
\end{equation}
\end{theorem}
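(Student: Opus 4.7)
The plan is to condition on the rank of the pivot and exploit the classical fact that partitioning a uniformly random permutation produces two independent, uniformly random permutations on the resulting sub-arrays. Assume the pivot is taken to be the first element of the input (the argument for the last element is identical). Since the input is a uniformly random permutation of $\{1,\ldots,n\}$, the rank $j$ of the pivot is uniform on $\{1,\ldots,n\}$, each value occurring with probability $1/n$.

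Step 1. Count the cost of the top-level partition. Partitioning compares the pivot with each of the other $n-1$ elements exactly once, so this step contributes a deterministic factor of $z^{n-1}$ to the conditional generating function, irrespective of $j$.

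Step 2. Prove the key distributional claim: conditional on the pivot having rank $j$, the left sub-array (elements of rank $<j$, in the order they appeared) and the right sub-array (elements of rank $>j$, in the order they appeared) are independent, with the left sub-array uniform on $S_{j-1}$ and the right uniform on $S_{n-j}$ after the obvious relabelling. A clean way to see this is to exhibit the bijection
\[
\{\pi\in S_n:\pi(1)=j\}\;\longleftrightarrow\;S_{j-1}\times S_{n-j}
\]
that reads off, respectively, the relative order of the entries of $\pi(2),\ldots,\pi(n)$ that are less than $j$ and those that are greater than $j$. Both uniformity and independence of the two sub-arrays then follow from counting and from the uniform distribution on $S_n$.

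Step 3. Assemble the recurrence. Since the numbers of comparisons used in the two recursive calls are independent and have probability generating functions $G_{j-1}(z)$ and $G_{n-j}(z)$ respectively, the conditional generating function of the total number of comparisons given pivot rank $j$ is $z^{n-1}G_{j-1}(z)G_{n-j}(z)$. Averaging over $j$ with weight $1/n$ yields \eqref{eq1}. The base case \eqref{eq2} is immediate because the empty array requires zero comparisons, giving $a_{0,0}=1$ and $a_{0,k}=0$ for $k\geq 1$.

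The only real obstacle is Step 2, the randomness-preservation property of the partition; once that bijection is in hand the rest is bookkeeping of the $z^{n-1}$ factor and averaging over $j$.
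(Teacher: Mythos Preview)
Your approach is essentially the same as the paper's: both condition on the pivot rank, invoke the randomness-preservation property of partitioning, and account for the $n-1$ top-level comparisons. The paper carries this out at the level of the counts $a_{n,s}$, deriving the convolution recurrence
\[
a_{n,s}=\sum_{1\leq k\leq n}\binom{n-1}{k-1}\sum_{i+j=s-(n-1)}a_{n-k,i}\,a_{k-1,j}
\]
and then dividing by $n!$ and summing against $z^{s}$, whereas you work directly with probability generating functions via the product rule for sums of independent random variables. These are the same argument at two levels of abstraction.

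One genuine slip in Step~2: the map you describe is not a bijection. The domain $\{\pi\in S_n:\pi(1)=j\}$ has $(n-1)!$ elements while $S_{j-1}\times S_{n-j}$ has $(j-1)!(n-j)!$; these differ by a factor of $\binom{n-1}{j-1}$. What you in fact have is a surjection whose fibres all have size $\binom{n-1}{j-1}$ (the choice of which positions among $\pi(2),\ldots,\pi(n)$ carry the entries smaller than $j$ is free). That constant-fibre property is precisely what yields uniformity and independence on the image, so your conclusion stands; only the word ``bijection'' must go. The paper phrases this correctly as the assumption that every pair in $S_{k-1}\times S_{n-k}$ arises from exactly $\binom{n-1}{k-1}$ permutations in $S_n^{k}$, and notes (as you should too) that this is a property of the specific partitioning scheme rather than an automatic fact.
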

\begin{proof}
We assume that the first partitioning stage requires $n-1$ comparisons (for some other variants this might be $n+1$ which would result in the same asymptotic expression).\par  
Fix some $k$ such that $1\leq k \leq n$. Let $S_{n}^{k}$ denote the set of all $(n-1)!$ permutations in $S_{n}$ having the pivot element equal to $k$.\par 
Each element $\pi\in S_{n}^{k}$, after the partitioning procedure executed on it, produces a pair $(\sigma_{1}(\pi),\sigma_{2}(\pi))\in S_{k-1}\times S_{n-k}$ where $\sigma_{1}(\pi)\in S_{k-1}$ and $\sigma_{2}(\pi)\in S_{n-k}$ denote the permutations associated with the sub arrays of sizes $k-1$ and $n-k$ obtained after the partitioning procedure is applied on $\pi$, respectively.\par  
We further assume that the partitioning method preserves \emph{equally likely} distribution on \emph{produced permutations}, that is, every pair in $S_{k-1}\times S_{n-k}$ should get produced from partitioning of the same number of permutations in $S^{k}_{n}$, that is, $\binom{n-1}{k-1}$ \cite{hennequin}.
Thus we can write
\begin{equation}
\label{eq3}
a_{n,s}=\sum_{1\leq k \leq n}\binom{n-1}{k-1}\sum_{i+j=s-(n-1)}a_{n-k,i}\,a_{k-1,j}.
\end{equation}
Multiplying equation \eqref{eq3} by $z^{s}$ and dividing by $n!$ we get
\begin{eqnarray*}
\normalfont
\frac{a_{ns}\, z^{s}}{n!}=\sum_{1\leq k \leq n}\frac{z^{s}}{n}\, \sum_{i+j=s-(n-1)}\frac{a_{n-k,i}}{(n-k)!}\cdot \frac{a_{k-1,j}}{(k-1)!}\\
=\sum_{1\leq k \leq n}\frac{z^{s}}{n}\cdot \left\{\text{coefficient of } z^{s-(n-1)} \text{ in } G_{n-k}(z)\cdot G_{k-1}(z)\right\}\\
=\sum_{1\leq k \leq n}\frac{z^{n-1}}{n}\cdot z^{s-(n-1)} \left\{\text{coefficient of } z^{s-(n-1)} \text{ in } G_{n-k}(z)\cdot G_{k-1}(z)\right\}
\end{eqnarray*}
after which summing on $s$ gives us equation \eqref{eq1}.
\end{proof}
Now consider the \emph{double generating function} $H(z,u)$ defined by
\begin{equation}
\label{eq4}
H(z,u)=\sum_{n\geq 0}G_{n}(z) u^{n}.
\end{equation}
\begin{corollary} We have
\begin{equation}
\label{eq5}
\frac{\partial H(z,u)}{\partial u}= H^{2}(z,zu),
\end{equation}
and 
\begin{equation}
\label{eq6}
H(1,u)=(1-u)^{-1}.
\end{equation}
\begin{proof}
From equation \eqref{eq1} we have
\begin{eqnarray*}
\frac{\partial H(z,u)}{\partial u}=\sum_{n\geq 1}(uz)^{n-1}\sum_{1\leq j \leq n}G_{n-j}(z)G_{j-1}(z)\\
=\sum_{n\geq 1}(uz)^{n-1} \cdot \left\{\text{coefficient of } (uz)^{n-1} \text{ in } H(z,uz)\cdot H(z,uz) \right\} \\
=H(z,zu)\cdot H(z,zu).
\end{eqnarray*} 
Equation \eqref{eq6} follows from the fact that $G_{n}(1)=1$.
\end{proof}
\end{corollary}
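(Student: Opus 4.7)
The plan is to read the two identities straight off the recurrence in Theorem~2.1 after differentiating the defining series \eqref{eq4} termwise in $u$.

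First I would compute
\[
\frac{\partial H(z,u)}{\partial u}=\sum_{n\geq 1} n\,G_n(z)\,u^{n-1},
\]
and then use \eqref{eq1} in the form $n\,G_n(z) = z^{n-1}\sum_{1\leq j\leq n} G_{n-j}(z)\,G_{j-1}(z)$ to absorb the factor $z^{n-1}$ together with $u^{n-1}$ into $(zu)^{n-1}$. That gives
\[
\frac{\partial H(z,u)}{\partial u}=\sum_{n\geq 1} (zu)^{n-1}\sum_{1\leq j\leq n} G_{n-j}(z)\,G_{j-1}(z).
\]

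Second, I would identify the inner sum as a Cauchy convolution. Setting $m=n-1$ and $i=j-1$, the inner sum becomes $\sum_{i=0}^{m} G_{m-i}(z)\,G_i(z)$, which is exactly the coefficient of $(zu)^m$ in $H(z,zu)^2$ (since $H(z,zu)=\sum_{k\geq 0} G_k(z)(zu)^k$). Therefore the double sum collapses to
\[
\sum_{m\geq 0} (zu)^m \bigl[(zu)^m\bigr] H(z,zu)^2 \;=\; H(z,zu)^2,
\]
which proves \eqref{eq5}.

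For \eqref{eq6} I would just observe that the coefficients of $G_n(z)$ are probabilities partitioning $S_n$ by the number of comparisons, so $G_n(1)=\sum_{k\geq 0} a_{n,k}/n! = 1$, whence $H(1,u)=\sum_{n\geq 0} u^n=(1-u)^{-1}$.

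There is no real obstacle: the whole argument is a manipulation of formal power series. The only point requiring a little care is the re-indexing that turns the convolution $\sum_{j=1}^{n} G_{n-j}(z)G_{j-1}(z)$ into the $(n-1)$-st coefficient of $H(z,zu)^2$, so that the factor $(zu)^{n-1}$ matches the summation variable of the square and the whole series telescopes into a closed form.
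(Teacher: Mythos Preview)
Your proposal is correct and follows essentially the same route as the paper: termwise differentiation of \eqref{eq4}, substitution of \eqref{eq1} to produce $(zu)^{n-1}$, recognition of the inner sum as the Cauchy convolution giving the $(n-1)$-st coefficient of $H(z,zu)^2$, and the observation $G_n(1)=1$ for \eqref{eq6}. You are just a bit more explicit than the paper about the re-indexing $m=n-1$, $i=j-1$, but the argument is the same.
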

Now notice that
\begin{equation}
\label{eq7}
\beta_{s}(n)=\left[\frac{d^{s}}{dz^{s}}G_{n}(z)\right]_{z=1}.
\end{equation}
The generating functions $f_{s}(u)$ of $\beta_{s}(n)$ are
\begin{equation}
\label{eq8}
f_{s}(u)=\sum_{n\geq 0}\beta_{s}(n)u^{n}.
\end{equation}
Using Taylor's formula and equation \eqref{eq7} we can write
\begin{equation}
\label{eq9}
H(z,u)=\sum_{s\geq 0}f_{s}(u)\frac{(z-1)^{s}}{s!}.
\end{equation}
\begin{theorem}
For integer $s\geq 0$ we have
\begin{equation}
\label{eq10}
f'_{s}(u)=s!\cdot \sum_{j+k+l+m=s}\frac{f^{(k)}_{j}(u) \cdot f_{l}^{(m)}(u)\cdot u^{k+m} }{j!\cdot k!\cdot  l!\cdot m!},
\end{equation}
\end{theorem}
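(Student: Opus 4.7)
The plan is to read off both sides of equation \eqref{eq5} as formal power series in $(z-1)$ using the Taylor expansion \eqref{eq9}, and then equate coefficients of $(z-1)^{s}$.

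First I would substitute \eqref{eq9} into the left-hand side of \eqref{eq5}, which immediately gives
\[
\frac{\partial H(z,u)}{\partial u}=\sum_{s\geq 0}f'_{s}(u)\,\frac{(z-1)^{s}}{s!}.
\]
The main work is on the right-hand side. The key observation is that $zu=u+(z-1)u$, so one should Taylor-expand $H(z,zu)$ in its second argument about the point $u$, treating $(z-1)u$ as the displacement:
\[
H(z,zu)=\sum_{k\geq 0}\frac{\partial^{k}H(z,u)}{\partial u^{k}}\cdot\frac{\bigl((z-1)u\bigr)^{k}}{k!}.
\]
Now apply \eqref{eq9} again, differentiated $k$ times in $u$, to get $\partial^{k}H(z,u)/\partial u^{k}=\sum_{j\geq 0}f_{j}^{(k)}(u)(z-1)^{j}/j!$. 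Substituting and collecting the factors of $(z-1)$ yields
\[
H(z,zu)=\sum_{j,k\geq 0}\frac{f_{j}^{(k)}(u)\,u^{k}}{j!\,k!}\,(z-1)^{j+k}.
\]

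Next I would square this double series. Multiplying two copies and pairing the indices $(j,k)$ with a fresh pair $(l,m)$ produces
\[
H^{2}(z,zu)=\sum_{j,k,l,m\geq 0}\frac{f_{j}^{(k)}(u)\,f_{l}^{(m)}(u)\,u^{k+m}}{j!\,k!\,l!\,m!}\,(z-1)^{j+k+l+m}.
\]
Finally I would extract the coefficient of $(z-1)^{s}$ from both sides of \eqref{eq5}. On the left this is $f'_{s}(u)/s!$; on the right it is the sum restricted to $j+k+l+m=s$. Multiplying through by $s!$ gives exactly \eqref{eq10}.

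There is no real analytic obstacle here: the argument is entirely formal manipulation of power series, and all the series involved are formal power series in $u$ with coefficients that are polynomials in $(z-1)$ once we fix the $s$-grading. The only subtlety to keep straight is the bookkeeping of derivatives with respect to $u$ versus powers of $(z-1)$, and in particular the fact that differentiating \eqref{eq9} $k$ times in $u$ does not disturb the powers of $(z-1)$ — that is what produces the $f_{j}^{(k)}$ factor rather than, say, a shifted index on $f$.
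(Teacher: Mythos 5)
Your proof is correct and follows essentially the same route as the paper: the paper Taylor-expands each coefficient $f_{p}(uz)$ about $u$ (equation \eqref{eq11}) and substitutes into \eqref{eq9} and \eqref{eq5}, which yields exactly the double series $\sum_{j,k}\frac{f_{j}^{(k)}(u)\,u^{k}}{j!\,k!}(z-1)^{j+k}$ that you obtain by expanding $H(z,zu)$ in its second argument. Squaring and comparing coefficients of $(z-1)^{s}$ is then identical in both arguments, so there is nothing to add.
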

\begin{proof}
Using Taylor's theorem we can write
$$f_{j}(x)=\sum_{k\geq 0}\frac{f_{j}^{(k)}(u)(x-u)^{k}}{k!}$$
which on substituting $x=uz$ gives
\begin{equation}
\label{eq11}
f_{j}(uz)=\sum_{k\geq 0}\frac{f_{j}^{(k)}(u)(z-1)^{k}u^{k}}{k!}.
\end{equation}
\par 
Now substituting equation \eqref{eq9} in equation \eqref{eq5} gives:
\begin{eqnarray*}
\sum_{s\geq 0}f'_{s}(u)\frac{(z-1)^{s}}{s!}= \sum_{p\geq 0}f_{p}(uz)\frac{(z-1)^{p}}{p!} \cdot \sum_{r\geq 0}f_{r}(uz)\frac{(z-1)^{r}}{r!}\\
= \sum_{p\geq 0}\frac{(z-1)^{p}}{p!}\sum_{l\geq 0}\frac{f_{p}^{(l)}(u)(z-1)^{l}u^{l}}{l!}\cdot  \sum_{r\geq 0}\frac{(z-1)^{r}}{r!}\sum_{m\geq 0}\frac{f_{r}^{(m)}(u)(z-1)^{m}u^{m}}{m!}\\
=\sum_{h\geq 0}(z-1)^{h}\sum_{j+k+l+m=h}\frac{1}{j!}\cdot \frac{f^{(k)}_{j}(u)\, u^{k}}{k!}\cdot \frac{1}{l!}\cdot \frac{f_{l}^{(m)}(u)\, u^{m}}{m!}
\end{eqnarray*}
Now comparing coefficients on both sides of the equation gives
$$f'_{s}(u)=s!\cdot \sum_{j+k+l+m=s}\frac{f^{(k)}_{j}(u) \cdot f_{l}^{(m)}(u)\cdot u^{k+m} }{j!\cdot k!\cdot  l!\cdot m!}.$$
\end{proof}
\noindent
\textbf{Notation:} In the following let $L(u)=\log\left(\frac{1}{1-u}\right).$ Further let $\mathcal{R}_{p,q}(u)$ be an unspecified linear combination of terms of the form $L^{i}(u)(1-u)^{-j}$ where $i,j$ are integers with either $j<q$ and $i$ is arbitrary, or $j=q$ and $i\leq p$.
\begin{corollary}
For integers $s\geq 0$ we have
\begin{equation}
\label{firse5}
f_{s}(u)=\frac{2^{s}\, s!\, L^{s}(u)}{(1-u)^{s+1}}+\frac{s(H_{s}-2)\, 2^{s}\, s!\, L^{s-1}(u)}{(1-u)^{s+1}}+\mathcal{R}_{s-2,s+1}(u),
\end{equation}
where $H_{s}=\sum_{k=1}^{s}\frac{1}{k}$.
\end{corollary}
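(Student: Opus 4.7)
I plan to prove \eqref{firse5} by induction on $s$ using the differential recurrence \eqref{eq10}. The base case $s=0$ is exactly \eqref{eq6} (with the conventions $L^0 \equiv 1$ and $\mathcal{R}_{-2,1}\equiv 0$). For the inductive step I first separate the two tuples on the right-hand side of \eqref{eq10} in which $f_s$ itself appears, namely $(j,k,l,m) = (s,0,0,0)$ and $(0,0,s,0)$; each contributes $f_s(u)\,f_0(u) = f_s(u)/(1-u)$, and \eqref{eq10} becomes the linear first-order ODE
\begin{equation*}
f'_s(u) - \frac{2}{1-u}\, f_s(u) = G_s(u), \qquad G_s(u) := s!\sum_{\substack{j+k+l+m=s\\ j,\, l<s}}\frac{f_j^{(k)}(u)\, f_l^{(m)}(u)\, u^{k+m}}{j!\, k!\, l!\, m!},
\end{equation*}
whose right-hand side depends only on $f_0,\ldots,f_{s-1}$ and is explicit by the inductive hypothesis. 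With integrating factor $(1-u)^2$ this integrates to
\begin{equation*}
f_s(u) = \frac{1}{(1-u)^2}\int_0^u (1-t)^2\, G_s(t)\, dt,
\end{equation*}
the constant vanishing because $f_s(0) = \beta_s(0) = 0$ for $s\geq 1$.

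Next, I substitute the inductive formula for $f_j$ and its derivatives into $G_s$ and track the top two orders. Since every admissible tuple satisfies $j+k+l+m=s$, each product $f_j^{(k)}\, f_l^{(m)}$ carries denominator $(1-u)^{s+2}$ modulo remainders, with $L$-degree at most $j+l$. The leading $L^s/(1-u)^{s+2}$ contribution therefore comes only from the $s-1$ tuples with $k=m=0$ and $1\leq j,l\leq s-1$, $j+l=s$; a direct multiplication (the $j!\,l!$ factors in the leading coefficients of $f_j$ and $f_l$ cancel against the denominator in \eqref{eq10}) gives
\begin{equation*}
G_s(u) = \frac{2^s\, s!\,(s-1)\, L^s(u)}{(1-u)^{s+2}} + \mathcal{E}_s(u),
\end{equation*}
where $\mathcal{E}_s$ assembles three kinds of subleading input: (a) the same $k=m=0$, $j+l=s$ tuples with one factor drawn from the $L^{j-1}$-piece of $f_j$; (b) tuples with $k+m=1$ whose leading parts produce combined $L$-degree $s-1$; and (c) the expansion $u^{k+m} = (1-(1-u))^{k+m}$, which redistributes the leading part across several $(1-u)$-orders. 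Everything else lies in $\mathcal{R}_{s-2,\,s+2}(u)$ by closure of the remainder class under multiplication, differentiation, and multiplication by polynomials in $u$.

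Finally, I integrate $(1-t)^2 G_s(t)$ using the recursion
\begin{equation*}
\frac{d}{du}\Bigl[\frac{L^{p}(u)}{(1-u)^{q}}\Bigr] = \frac{p\, L^{p-1}(u) + q\, L^{p}(u)}{(1-u)^{q+1}},
\end{equation*}
which antidifferentiates $L^p/(1-u)^{q+1}$ modulo a lower-$p$ term; iterating down to $p=0$ and then integrating $1/(1-u)^{q+1}$ directly leaves everything, modulo remainders, in the claimed form. Division by $(1-u)^2$ reproduces the announced top coefficient $2^s\,s!\,L^s/(1-u)^{s+1}$ of $f_s$ immediately. The principal obstacle will be verifying that after the three kinds of subleading input in $\mathcal{E}_s$ are summed and integrated, the coefficient of $L^{s-1}/(1-u)^{s+1}$ in $f_s$ is \emph{precisely} $s(H_s-2)\,2^s\,s!$: the harmonic number $H_s$ should emerge from the partial-fraction-style sums over $j=1,\ldots,s-1$ produced once the factorial weights of \eqref{eq10} and the inductive subleading coefficients $j(H_j-2)$ are combined, while the $-2$ shift originates from contributions (b)--(c) together with the constant terms produced by the antidifferentiation identity (as already witnessed at $s=1$, where $H_1-2=-1$). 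Once these two leading coefficients are matched, everything else lies in $\mathcal{R}_{s-2,\,s+1}(u)$ automatically, since this class is closed under division by $(1-u)^2$ and under the antidifferentiation scheme.
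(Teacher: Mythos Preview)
Your plan coincides with the paper's proof: induction on $s$, isolation of the two $f_s$-terms in \eqref{eq10} to obtain the linear ODE $f_s'-\tfrac{2}{1-u}f_s=G_s$, the integrating-factor solution, and extraction of the top two $L$-powers from $G_s$ before integrating by parts. For the computation you flag as the principal obstacle, the paper proceeds exactly as you outline, except that your source~(c) is already absorbed into $\mathcal{R}_{s-2,s+2}$ (only the constant term of $u^{k+m}$ survives at this order); thus only (a) and (b) contribute, giving the $L^{s-1}$-coefficient $2^s s!\bigl(\tfrac{s(s+1)}{2}+2\sum_{k=1}^{s-1}k(H_k-2)\bigr)$ in $G_s$, which after integration by parts and the identity $\sum_{k=1}^{s-1}kH_k=\tfrac{s(s-1)}{2}\bigl(H_s-\tfrac12\bigr)$ collapses to $s(H_s-2)\,2^s s!$ --- so the $-2$ is inherited from the inductive coefficients $j(H_j-2)$ in~(a) rather than produced by (b)--(c) as you surmise.
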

\begin{proof}
We prove this by induction on $s$. The base case when $s=0$ is trivial since $f_{0}(u)=\sum_{n\geq 0}\beta_{0}(n)u^{n}=(1-u)^{-1}$. Now assume that the assertion is true for all integers $0\leq i\leq s-1$. First observe that the equation \eqref{eq10} is the following linear differential equation:
\begin{equation}
\label{firse1}
f'_{s}(u)-\frac{2}{1-u}f_{s}(u)=p_{s}(u)
\end{equation}
where
\begin{equation}
\label{firse3}
p_{s}(u)=s! \cdot \sum_{\substack{j+k+l+m=s\\ j,l\neq s}}\frac{f^{(k)}_{j}(u) \cdot f_{l}^{(m)}(u)\cdot u^{k+m} }{j!\cdot k!\cdot  l!\cdot m!}.
\end{equation}
Solving the linear differential equation \eqref{firse1} gives us
\begin{equation}
\label{firse2}
f_{s}(u)=\frac{1}{(1-u)^{2}}\int_{0}^{u}p_{s}(t)\cdot (1-t)^{2} dt.
\end{equation}
In the following calculation we quite frequently use the fact that the derivatives $g^{(i)}(u)$ of
$$g(u)=d\cdot q!\cdot L^{p}(u)(1-u)^{-q-1}+\mathcal{R}_{p-1,q+1}(u)\qquad \text{($d$ being a constant)}$$
satisfy
$$g^{(i)}(u)=d\cdot (q+i)!\cdot L^{p}(u)\cdot (1-u)^{-q-i-1}+\mathcal{R}_{p-1,q+i+1}(u).$$ 
Specially for $j\leq s-1$ we have
$$f_{j}^{(i)}(u)=2^{j}\cdot (j+i)!\cdot L^{j}(u)\cdot (1-u)^{-j-i-1}+\mathcal{R}_{j-1,j+i+1}(u).$$
After equation \eqref{firse3} we have
\begin{eqnarray*}
p_{s}(u)=s!\cdot \sum_{\substack{j+l=s\\ j,l\neq s}}\frac{f^{}_{j}(u) \cdot f_{l}^{}(u)}{j!\cdot l!}+2\cdot s!\cdot \sum_{j+l=s-1}\frac{u\cdot f'_{j}(u)\cdot f_{l}(u)}{j!\cdot l!}+\mathcal{R}_{s-2,s+2}(u)\\
=2^{s}\cdot s! \cdot \sum_{\substack{j+l=s\\ j,l\neq s}}\left\{\left(\frac{L^{j}(u)}{(1-u)^{j+1}}+\frac{j(H_{j}-2)L^{j-1}(u)}{(1-u)^{j+1}}\right)\cdot \left(\frac{L^{l}(u)}{(1-u)^{l+1}}+\frac{l(H_{l}-2)L^{l-1}(u)}{(1-u)^{l+1}}\right)\right\}\\
+2^{s}\cdot s! \sum_{j+l=s-1}\left\{ u\cdot (j+1)\frac{L^{j}(u)}{(1-u)^{j+2}}\cdot \frac{L^{l}(u)}{(1-u)^{l+1}}\right\}+\mathcal{R}_{s-2,s+2}(u)\\
=\frac{2^{s}s!}{(1-u)^{s+2}}\left\{(s-1)L^{s}(u)+\left(\frac{s(s+1)}{2}+2\sum_{k=1}^{s-1}k\cdot (H_{k}-2)\right)L^{s-1}(u)\right\}+\mathcal{R}_{s-2,s+2}(u).
\end{eqnarray*}
Thereafter equation \eqref{firse2} gives
\begin{eqnarray*}
f_{s}(u)=\frac{2^{s}\cdot s!}{(1-u)^{2}}\int_{0}^{u}\frac{1}{(1-t)^
{s}}\left\{(s-1)L^{s}(t)+\left(\frac{s(s+1)}{2}+2\sum_{k=1}^{s-1}k\cdot (H_{k}-2)\right)L^{s-1}(t)\right\}dt\\+\mathcal{R}_{s-2,s+1}(u)\\
=\frac{2^{s}s!L^{s}(u)}{(1-u)^{s+1}}+\left(\frac{s}{2}+\frac{2}{s-1}\sum_{k=1}^{s-1}k\cdot (H_{k}-2)\right)\frac{2^{s}s!L^{s-1}(u)}{(1-u)^{s+1}}+\mathcal{R}_{s-2,s+1}(u)\\
=\frac{2^{s}s!L^{s}(u)}{(1-u)^{s+1}}+s\cdot(H_{s}-2)\frac{2^{s}s!L^{s-1}(u)}{(1-u)^{s+1}}+\mathcal{R}_{s-2,s+1}(u)
\end{eqnarray*}
where to deduce the last step we used the identity $\sum_{k=1}^{s-1}k\cdot H_{k}=\frac{(s)(s-1)}{2}\left(H_{s}-\frac{1}{2}\right),$ and to deduce the second last step we used integration by parts.
\end{proof}
\begin{proof}[Proof of Theorem \ref{mainthm}]
We would use the following result to conclude the assertion:
\begin{theorem}[Flajolet and Odlyzko \cite{flajolet}]
Let 
\begin{equation}
f_{\alpha,\beta}(u)\equiv f(u)=\frac{1}{(1-u)^{\alpha}}\left(\log\frac{1}{1-u}\right)^{\beta},
\end{equation}
where $\alpha$ is a positive integer and $\beta$ is a non-negative integer. The coefficient of $u^{n}$ in $f(u)$, denoted $[u^{n}]f(u)$, admits the asymptotic expansion
\begin{equation}
\label{main}
[u^{n}]f(u)\sim \frac{n^{\alpha-1}}{(\alpha-1)!}(\log n)^{\beta}\left[1+\frac{C_{1}}{1!}\,\frac{\beta}{\log n}+\frac{C_{2}}{2!} \, \frac{(\beta)(\beta-1)}{\log^{2}(n)}+\cdots \right],
\end{equation}  
where 
$$C_{k}=(\alpha-1)!\left[\frac{d^{k}}{dx^{k}}\frac{1}{\Gamma(x)}\right]_{x=\alpha}$$
and $\Gamma(\cdot)$ is the gamma function.
\end{theorem}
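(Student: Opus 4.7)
The plan is to reduce the logarithmic case $\beta\geq 1$ to the purely algebraic case $\beta=0$ by differentiating with respect to the exponent $\alpha$. First observe the identity
\[
\left(\log\frac{1}{1-u}\right)^{\beta}(1-u)^{-\alpha}=\frac{\partial^{\beta}}{\partial\alpha^{\beta}}(1-u)^{-\alpha},
\]
which follows at once by iterating $\partial_{\alpha}(1-u)^{-\alpha}=(1-u)^{-\alpha}\log(1/(1-u))$. Starting from the classical identity
\[
[u^{n}](1-u)^{-\alpha}=\binom{n+\alpha-1}{n}=\frac{\Gamma(n+\alpha)}{\Gamma(\alpha)\Gamma(n+1)},
\]
which remains valid in a complex neighborhood of the target positive integer $\alpha$, I would exchange coefficient extraction with the $\beta$-fold $\alpha$-derivative to obtain the exact identity
\[
[u^{n}]f_{\alpha,\beta}(u)=\frac{1}{\Gamma(n+1)}\,\frac{\partial^{\beta}}{\partial\alpha^{\beta}}\frac{\Gamma(n+\alpha)}{\Gamma(\alpha)}.
\]

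Next I would substitute the Stirling-type asymptotic
\[
\frac{\Gamma(n+\alpha)}{\Gamma(n+1)}=n^{\alpha-1}\bigl(1+O(1/n)\bigr),
\]
uniform for $\alpha$ in any fixed compact set, and apply Leibniz's rule to the product $n^{\alpha-1}\cdot(1/\Gamma(\alpha))$ (using $\partial_{\alpha}n^{\alpha-1}=n^{\alpha-1}\log n$):
\[
\frac{\partial^{\beta}}{\partial\alpha^{\beta}}\!\left[\frac{n^{\alpha-1}}{\Gamma(\alpha)}\right]=\sum_{k=0}^{\beta}\binom{\beta}{k}\,n^{\alpha-1}(\log n)^{\beta-k}\,\frac{d^{k}}{d\alpha^{k}}\frac{1}{\Gamma(\alpha)}.
\]
Factoring $n^{\alpha-1}(\log n)^{\beta}/(\alpha-1)!$ out of the sum, rewriting $\binom{\beta}{k}=\beta(\beta-1)\cdots(\beta-k+1)/k!$, and recognising $C_{k}=(\alpha-1)!\,(d^{k}/d\alpha^{k})(1/\Gamma(\alpha))|_{\alpha}$ yields precisely the expansion \eqref{main}; note that the $k=0$ contribution is $1$ because $1/\Gamma(\alpha)=1/(\alpha-1)!$ at a positive integer.

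The delicate step is the exchange of the $\beta$-fold $\alpha$-derivative with the large-$n$ asymptotic expansion. I would justify this via Cauchy's differentiation formula
\[
\frac{\partial^{\beta}}{\partial\alpha^{\beta}}g(\alpha)=\frac{\beta!}{2\pi i}\oint_{|w-\alpha|=r}\frac{g(w)\,dw}{(w-\alpha)^{\beta+1}},
\]
applied to $g(w)=\Gamma(n+w)/(\Gamma(w)\Gamma(n+1))$, which is entire in $w$ because $\Gamma(n+w)/\Gamma(w)=w(w+1)\cdots(w+n-1)$ is a polynomial. On the compact circle the Stirling expansion for $\Gamma(n+w)/\Gamma(n+1)$ is uniform in $w$, so the $O(1/n)$ remainder survives the contour differentiation and contributes a total error of order $n^{\alpha-2}(\log n)^{\beta}$, which is dominated by the last retained term in \eqref{main}. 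The main obstacle of the proof is precisely this uniform-remainder analysis on the circle; once that is secured, the algebraic identification of the constants $C_{k}$ is a direct computation.
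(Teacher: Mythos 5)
The paper does not prove this statement at all: it is quoted from Flajolet and Odlyzko and used as a black box, so your derivation is necessarily a different route, and it is essentially correct. Your plan is the standard elementary argument available precisely because $\alpha$ is a positive integer and $\beta$ a non-negative integer: writing $(1-u)^{-\alpha}\log^{\beta}(1/(1-u))=\partial_{\alpha}^{\beta}(1-u)^{-\alpha}$, noting that $[u^{n}](1-u)^{-\alpha}=\Gamma(n+\alpha)/(\Gamma(\alpha)\Gamma(n+1))$ is a polynomial in $\alpha$ for each fixed $n$ (so the exchange of $[u^{n}]$ with $\partial_{\alpha}^{\beta}$ is immediate), and then applying Leibniz to $n^{\alpha-1}/\Gamma(\alpha)$, which reproduces the bracket in \eqref{main} exactly, the falling factorial $\beta(\beta-1)\cdots(\beta-k+1)$ truncating the sum at $k=\beta$ and the $k=0$ term giving $1$. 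Your treatment of the remainder is the right idea; the only quantitative quibble is that on a \emph{fixed} circle $|w-\alpha|=r$ the Cauchy estimate yields an error $O(n^{\alpha+r-2})$ rather than $O(n^{\alpha-2}\log^{\beta}n)$ --- your stated bound requires shrinking the contour to radius $r=1/\log n$, and either choice is $o(n^{\alpha-1})$, hence dominated by every retained term, so the conclusion stands. By comparison, the cited Flajolet--Odlyzko proof proceeds by genuine singularity analysis: a Hankel contour representation of the coefficient integral around $u=1$ together with transfer theorems. That heavier machinery is what one needs for non-integer $\alpha$ or for functions that are only \emph{asymptotic} to $f_{\alpha,\beta}$ near the singularity; your differentiation-under-the-exponent argument is more elementary, fully rigorous once the uniform-remainder step is written out, and entirely adequate for the integer-parameter instances actually invoked in the proof of Theorem \ref{mainthm}.
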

We can use the above result to conclude that
$$[u^{n}]\left\{2^{s}\cdot s!\cdot \frac{{L}^{s}(u)}{(1-u)^{s+1}}\right\}=2^{s}{n^{s}}(\log^{s}{n}+s\cdot C_{1}\cdot \log^{s-1}{n}+O(\log^{s-2}{n})),$$
$$[u^{n}]\left\{2^{s}\cdot s!\cdot \frac{L^{s-1}}{(1-u)^{s+1}}\right\}=2^{s}n^{s}(\log^{s-1}{n}+O(\log^{s-2}{n})).$$
Here
$$C_{1}=s!\cdot \left[\frac{d}{dx}\frac{1}{\Gamma(x)}\right]_{x=s+1}=-\frac{s!}{\Gamma(s+1)}\cdot \frac{\Gamma'(s+1)}{\Gamma(s+1)}=- (H_{s}-\gamma).$$
Here we used the result $\frac{\Gamma'(s+1)}{\Gamma(s+1)}=H_{s}-\gamma=\psi(s+1)$ where $\psi(\cdot)$ is the digamma function \cite{digamma}. Now we can conclude our result after equation \eqref{firse5}.
\end{proof}

\bibliographystyle{amsplain}
\bibliography{sample.bib}
\end{document}